\documentclass[11pt,table]{article}
\pdfoutput=1

\usepackage{authblk}
\usepackage{microtype}
\usepackage{graphicx}
\usepackage{booktabs} 
\usepackage{tikz}
\usetikzlibrary {positioning}
\usepackage{hyperref}
\usepackage{mathtools}

\title{Finding Equilibrium in Multi-Agent Games \\with Payoff Uncertainty}
\author[1]{Wenshuo Guo\thanks{wsguo@berkeley.edu}}
\author[1]{Mihaela Curmei}
\author[1]{Serena Wang}
\author[1]{Benjamin Recht}
\author[1,2]{Michael I. Jordan}
\affil[1]{EECS, University of California, Berkeley}
\affil[2]{Department of Statistics, University of California, Berkeley}

\usepackage[utf8]{inputenc} 
\usepackage[T1]{fontenc}    
\usepackage[in]{fullpage}
\usepackage{booktabs}       
\usepackage{amsfonts}       
\usepackage{nicefrac}       
\usepackage{microtype}      
\usepackage{amsmath,amssymb}
\usepackage{amsthm}
\usepackage{mathtools}
\usepackage{mathrsfs}
\usepackage{thmtools}
\usepackage{thm-restate}
\usepackage{comment}
\usepackage{parskip}
\usepackage{siunitx}
\sisetup{tight-spacing}
\usepackage{etoolbox}
\usepackage{enumitem}
\usepackage{graphicx}
\usepackage{subcaption}
\usepackage[textsize=scriptsize,textwidth=2.2cm]{todonotes}
\usepackage{tabularx}
\usepackage{xcolor} 
\usepackage[numbers,sort]{natbib}
\bibliographystyle{plainnat}

\usepackage{hyperref}
\usepackage{longtable}
\usepackage{afterpage}
\usepackage{algorithm}
\usepackage{algpseudocode}
\usepackage{bm}
\usepackage{etoc}

\extrafloats{100}

\newtoggle{showtodos}
\togglefalse{showtodos}

\newtoggle{showappendix}
\toggletrue{showappendix}

\newtoggle{isicml}
\togglefalse{isicml}

\newcommand{\R}{\mathbb{R}}
\newcommand{\E}{\mathbb{E}}

\hypersetup{draft}
\algblock{Input}{EndInput}
\algnotext{EndInput}
\algblock{Output}{EndOutput}
\algnotext{EndOutput}

\usepackage[title]{appendix}

\usepackage[utf8]{inputenc} 
\usepackage[T1]{fontenc}    
\usepackage{hyperref}       
\usepackage{url}            
\usepackage{booktabs}       
\usepackage{amsfonts}       
\usepackage{nicefrac}       
\usepackage{microtype}      
\usepackage{graphicx}
\usepackage{amsthm}
\usepackage{amsmath,amssymb, graphicx,multicol,multirow,array, bbm,varwidth,semantic}
\usepackage[margin=1in]{geometry}
\hypersetup{
    colorlinks=true,
    linkcolor=blue,
    filecolor=magenta,      
    urlcolor=cyan,
}
\usepackage{xcolor}
\usepackage{thmtools, thm-restate}

\makeatletter


\theoremstyle{definition}
\newtheorem{definition}{Definition}
\newtheorem{lemma}{Lemma}
\declaretheorem{theorem}

\usepackage[section]{placeins}
\usepackage{booktabs} 
\usepackage{multirow}
\usepackage{siunitx}

\theoremstyle{definition}
\theoremstyle{remark}

\usepackage{booktabs}

\usepackage{tikz}
\usetikzlibrary {positioning}

\DeclareMathOperator*{\argmax}{argmax}
\newcommand{\x}{\times}

\newcommand{\Prob}{\mathbb{P}}
\renewcommand{\bar}{\overline}
\renewcommand{\epsilon}{\varepsilon}

\begin{document}

\date{}

\maketitle

\begin{abstract}
We study the problem of finding equilibrium strategies in multi-agent games with incomplete payoff information, where the payoff matrices are only known to the players up to some bounded uncertainty sets. In such games, an \textit{ex-post equilibrium} characterizes equilibrium strategies that are robust to the payoff uncertainty. When the game is one-shot, we show that in zero-sum polymatrix games, an ex-post equilibrium can be computed efficiently using linear programming. We further extend the notion of ex-post equilibrium to stochastic games, where the game is played repeatedly in a sequence of stages and the transition dynamics are governed by an Markov decision process (MDP). We provide sufficient condition for the existence of an ex-post Markov perfect equilibrium (MPE). We show that under bounded payoff uncertainty, the value of any two-player zero-sum stochastic game can be computed up to a tight value interval using dynamic programming.
\end{abstract}

\section{Introduction}

The problem of certifying the existence of equilibria of multi-agent games has been a focal point in the game theory literature. In particular, seminal work has resulted in algorithms that find equilibrium strategies efficiently when all parameters of the game are known to the players (a.k.a., games of \textit{complete information}) \cite{nash:1950, shapley:1953}.

In practice, the parameters of the game are often only known partially by the players. This presents a core problem: how can we find equilibria in games in a way that is robust to uncertainty in the game parameters? The problem of finding robust solutions to multi-player games is a difficult area with many open questions, starting with even the most fundamental problem of defining robustness in games.

In this work, we focus on multi-agent games with payoff uncertainty. We consider both one-shot zero-sum games and stochastic games \cite{shapley:1953}, where the game dynamics are driven by a Markov decision process (MDP). We assume that the payoffs live in a bounded uncertainty set, with no further assumptions on the distribution of the payoffs.
We focus on the ex-post equilibrium, which is a common distribution-free concept first introduced in auction theory \cite{holmstrom:1983, cremer:1985}. The ex-post equilibrium is a strict notion of equilibrium, where each player's
strategy is a best response to the other players' strategies, under all possible realizations of the uncertain payoffs.

In one-shot games, we consider \textit{finite}, \textit{simultaneous-move} games. In such a game, there are a finite number of players who each may take a finite number of actions. The player chooses their actions simultaneously, in the absence of knowledge of the actions chosen by the other players. In fact, although Nash equilibria always exist in such a game, even with complete information, finding Nash equilibria for general-sum finite games can be intractable \cite{daskalakis:2006}. Therefore, we restrict to a smaller classes of finite games: zero-sum polymatrix games. In such games of complete information, a Nash equilibrium can be found efficiently using linear programming \cite{cai:2016}. 

We extend the idea of ex-post equilibria to stochastic multi-agent games \cite{shapley:1953}. We consider the infinite discounted setting, where the game is played repeatedly for an infinite number of stages, and the transition dynamics among stages are captured by an MDP. The goal of each player is to maximize the discounted expected payoff. A Markov perfect equilibrium (MPE) refers to a perfect equilibrium where the players' strategies depend only on the current stage. Under complete information about the stage payoffs and the MDP, \citet{shapley:1953} showed that an MPE can be found using dynamic programming. We consider payoff uncertainty in stochastic games. Specifically, we assume that at each stage, the payoffs live in some bounded uncertainty set. 

\subsection*{Contributions:} 
\begin{enumerate}[noitemsep, topsep=0em]
    \item We show that for zero-sum polymatrix games with bounded payoff uncertainty sets, an ex-post equilibrium can be found efficiently using linear programming. This is different from the work of \citet{Aghassi:2006}, who work with a weaker notion of robust equilibrium in general-sum finite games, one that does not yield efficiency guarantees.
    \item We  extend the notion of ex-post equilibrium to multi-agent stochastic games with payoff uncertainties; i.e., we define a notion of ex-post MPE. We provide sufficient conditions for the existence of such an equilibrium, and show that the feasible values of any two-player zero-sum stochastic game can be bounded in a tight interval using dynamic programming.
\end{enumerate}



\section{Related Work}

\textbf{Payoff uncertainty in one-shot games:} Multiple techniques have emerged for handling games with incomplete information in one-shot games. \citet{Harsanyi:1967} first extended Nash's result to games with incomplete information by assuming that the payoffs are drawn from a probability distribution known to all players, and introduced the concept of a \textit{Bayesian equilibrium}. 

While \citet{Harsanyi:1967}'s method assumes complete common knowledge only of the distribution of the payoffs, this assumption may still be too strong in practice. \citet{Aghassi:2006} consider the case when the distribution of the payoffs is unknown, but the payoffs are drawn from a bounded uncertainty set. They describe two ``distribution free'' concepts of equilibrium: the ex-post equilibrium and the robust-optimization equilibrium, where the robust-optimization equilibrium is a less restrictive notion that is guaranteed to exist. They give a gradient-based algorithm to approximately find a robust-optimization equilibrium for a general-sum finite game when the payoff comes from a polyhedral uncertainty set. However, their algorithm does not come with complexity or optimality guarantees. Games with payoff uncertainties have also been explored in applications in operational research, multi-agent system and controls \cite{piliouras2016risk, miao2018hybrid, zhang2019playing, garnaev2019power, rowland2019multiagent}.

\textbf{Stochastic games:}
 Many real-world systems can be modeled as multi-agent systems and are dynamical in nature. There has been extensive study on stochastic games in various fields including economics, mathematics and operation research, etc \cite{shapley:1953, mertens:1981, condon1992complexity, neyman2003stochastic}. Stochastic games are also closely related to the literature on reinforcement learning (RL). In a classic RL setting, MDPs are used to model a single agent’s interaction with the environment. In stochastic games, the MDP is generalized to a repeated game setting where multiple agents simultaneously interact with the environment.

Various types of incomplete information have been considered in repeated games. \citet{sastry1994decentralized, kiekintveld2011approximation} consider stochastic payoffs, where the payoffs are drawn from unknown distributions. \citet{kardecs:2011} consider N-player discounted stochastic games where both the transition probabilities and payoffs of the game uncertain and bounded in some uncertainty set. They show the existence of a robust-optimization equilibrium and propose an mathematical programming formulation to compute it. However, the robust-optimization equilibrium is a weaker notion than the ex-post equilibrium, and their algorithm is not guaranteed to find an ex-post MPE. 

\section{Ex-post Equilibrium in Polymatrix Games}
\label{sec:polymatrix}
\subsection{Polymatrix games}
An $N$-player zero-sum polymatrix game \cite{cai:2016} can be informally expressed as a graph, where each node represents a player, and each edge represents a two-player game between two nodes. The payoff for a given player is the sum of the player's payoffs from each game on an edge connected to the player's node. The $N$-player polymatrix game is \textit{zero-sum} if the sum of the total payoffs for all players is zero. 

Formally, an $N$-player polymatrix game \cite{cai:2016} consists of the following:
\begin{itemize}
    \item A finite set $[N] = \{1,...,N\}$ of players (nodes), and a finite set $E$ of edges, which consists of unordered pairs $[i,j]$ of players, $i \neq j, i \in [N], j \in [N]$.
    \item For each player $i \in [N]$, a finite set of actions $[d_i] = \{1,...,d_i\}$.
    
    \item A mixed strategy for player $i$ is defined by a vector in the simplex $x^i \in \Delta_{d_i}$, where $\Delta_{d_i} = \{x \in \mathbb{R}^{d_i}: x^T \textbf{1} = 1$, $x \geq 0\}$. Each entry of $x^i$ is the probability that player 1 plays action $i$. When there exists one entry of $x^i$ that is equal to one, $x^i$ is a pure strategy.
    
    \item For each edge $[i,j] \in E$, a two-player general-sum game with payoff matrices $A^{ij} \in \mathbb{R}^{d_i \times d_j}, A^{ji} \in \mathbb{R}^{d_j \times d_i}$. If player $i$ chooses $k \in [d_i]$ and player $j$ chooses action $l \in [d_j]$, then $A^{ij}_{kl}$ is the payoff for player $i$ and $A^{ji}_{lk}$ is the payoff for player $j$ in the two-player game. Note that the player chooses one strategy and plays it in all pairwise games with other players.

\end{itemize}

For each player $i \in [N]$ and a strategy profile for all players $(x^1,...,x^N)$, the total payoff for player $i$ is the sum of the payoffs over adjacent edges:
$$p_i(x^1,...,x^N) = \sum_{j: j \in [N], [i,j] \in E} (x^i)^T A^{ij} x^j.$$ Note that for a fixed $i$, the summation of the payoffs is taken over all edges in $E$ that include the player $i$ (and not the summation over the full set of edges in $E$). 

Now we are ready to define a zero-sum polymatrix game.
\begin{definition} \label{def: nzerosum} (Zero-sum polymatrix game)
An $N$-player polymatrix game is \textit{zero-sum} if for all pure strategy profiles $(x_1,...,x_N)$, the sum of the total payoffs for all players is zero: $$\sum_{i=1}^N p_i(x^1,...,x^N) = 0.$$
\end{definition}

A zero-sum polymatrix game is a more general game than a two-player zero-sum game. With complete information about the payoff matrices and finite action space, a Nash equilibrium exists for any zero-sum polymatrix game, and can be found efficiently using linear programming \cite{cai:2016}. We provide further definitions of the Nash equilibrium for polymatrix games and two-player zero-sum games in Appendix \ref{app:definitions}.

\subsection{Solving ex-post equilibrium in polymatrix games}

A Nash equilibrium is well-defined in polymatrix games when the payoff matrices are fully known. However, when the payoffs are uncertain, we need a more generalized notion of equilibrium to handle the uncertainty. An ex-post equilibrium describes a ``distribution-free" equilibrium that is robust to the payoff uncertainties \cite{holmstrom:1983, cremer:1985, Aghassi:2006}.  The idea is that for all feasible payoffs, no player has the incentive to change the strategy on their own.

\begin{definition}
(Ex-post equilibrium for $N$-player polymatrix game) A mixed strategy profile $(x^{1*},...,x^{N*})$ is an \emph{ex-post equilibrium} for an $N$-player polymatrix game with uncertainty sets $\mathcal{U}^{ij}$ ($A^{ij} \in \mathcal{U}^{ij}$ for all $i,j \in \{1,...,N\}$) if for each player $i \in \{1,...,N\}$, we have
$$x^{i*} \in \arg\max_{x^i \in \Delta_{d_i}} \sum_{[i,j] \in E} (x^i)^T A^{ij} x^{j*} \quad \forall A^{ij} \in \mathcal{U}^{ij}.$$
\end{definition}

Since ex-post equilibrium is a restrictive notion of equilibrium, it is not guaranteed to always exist \cite{Aghassi:2006}. However, when it exists, the ex-post equilibrium defines a powerful set of equilibrium strategies which are robust to payoff uncertainties. 
In a two-player zero-sum game, an ex-post equilibrium also allows us to conveniently locate the value of the game given the uncertainty set. Given an ex-post equilibrium $(x^\ast,y^\ast)$, for any feasible payoff $A \in \mathcal{U}$, the value of the game is simply contained in the set $\{(x^\ast)^\top A (y^\ast)^\top: A \in \mathcal{U}\}$. 

We show in Theorem \ref{thm:LP-expost-n} that any existing ex-post equilibrium can be found exactly for an $N$-player zero-sum polymatrix game using linear programming. 

To build the linear programming representation for Theorem \ref{thm:LP-expost-n}, we adapt an idea due to \citet{cai:2016} to characterize the payoffs of a zero-sum polymatrix game using a square matrix $R$ as follows: $R \in \mathbb{R}^{\sum_{i=1}^N d_i \times \sum_{i=1}^N d_i}$. The rows of $R$ are indexed by (player: action) pairs $(i:a_i)$ for $i \in [N]$, $a_i\in [d_i]$. The columns of $R$ are indexed in the same way. The entry $R_{(i:a_i),(j:a_j)} = A^{ij}_{a_ia_j}$ if $[i,j] \in E$. Therefore, $R$ characterizes the full payoff information of the game (see Appendix \ref{app:r_matrix} for a visual schematic of $R$). 

We consider the uncertainty set as the convex hull of known payoff matrices: $R \in \textbf{conv} \{R_1, ...R_K\}$. This is equivalent to having each pairwise payoff matrix be in a convex hull of matrices. For each $R_i$ where $i \in [K]$, the corresponding game is zero-sum as defined in Definition \ref{def: nzerosum}.  The robust LP formulation for the multi-player zero-sum polymatrix game is as follows:
\begin{align}\label{multiLP}
    \begin{split}
      \min_{x,w}\;\; &\sum_{l=1}^K\sum_{i=1}^N w_i^l \\
      \text{s.t.}  \;\; & w_i^l \geq e_{a_i}^T R_l x \;\;\;\;\;\; \forall i \in [N], a_i \in [d_i], l \in [K] \\
      &x \in \Delta,
    \end{split}
\end{align}
where $x \in \mathbb{R}^{\sum_{i \in [N]}d_i}$ is the vector which represents the strategy profile for all players, concatenating $x^i, i \in [N]$. The strategy for each player $i$ is in the probability simplex: $x^i \in \Delta_{d_i}$. We use $e_{a_i} \in \mathbb{R}^{\sum_{i \in [N]}d_i}$ to represent a one-hot vector, where player $i$ playing a single action $a_i$ with probability 1, and all other entries of $e_{a_i}$ are zero.


\begin{restatable}[Proof in Appendix \ref{app:proof-polymatrix}]{theorem}{LPexpostn}
\label{thm:LP-expost-n}
 Any ex-post equilibrium of the zero-sum polymatrix game with uncertainty set $\mathcal{U} = \textbf{conv}{\{R_1, R_2,...R_K\}}$ is an optimal solution of the LP problem (\ref{multiLP}). Conversely, any optimal solution for the LP problem (\ref{multiLP}) is an ex-post equilibrium of the game. 
\end{restatable}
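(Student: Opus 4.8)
The plan is to eliminate the epigraph variables $w_i^l$, reduce the LP to a minimization over strategy profiles $x$ alone, exploit the zero-sum structure to show the objective is bounded below by $0$, and then invoke convexity to match the vertex-wise optimality conditions of the LP with the ex-post equilibrium condition over the entire uncertainty set.

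First I would eliminate the variables $w_i^l$. For a fixed profile $x \in \Delta$ and a fixed index $l$, the constraints $w_i^l \ge e_{a_i}^\top R_l x$ over all $a_i \in [d_i]$ together with the minimization force $w_i^l = \max_{a_i \in [d_i]} e_{a_i}^\top R_l x$ at any optimum. Since the payoff $\sum_{[i,j]\in E}(x^i)^\top A^{ij} x^j$ is linear in $x^i$ over the simplex $\Delta_{d_i}$, this maximum over pure actions equals player $i$'s best-response value against $x^{-i}$ under the payoff matrix $R_l$. Writing $b_i^l(x) := \max_{a_i} e_{a_i}^\top R_l x$ for this best-response value and $p_i^l(x)$ for the payoff actually obtained by $x^i$, the LP objective becomes $\sum_{l=1}^K \sum_{i=1}^N b_i^l(x)$.

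Next I would invoke the zero-sum property. For each vertex $R_l$ the game is zero-sum (Definition \ref{def: nzerosum}), so $\sum_{i=1}^N p_i^l(x) = 0$ for every profile $x$. Decomposing $b_i^l(x) = p_i^l(x) + g_i^l(x)$, where $g_i^l(x) := b_i^l(x) - p_i^l(x) \ge 0$ is player $i$'s regret at $R_l$, the objective collapses to $\sum_{l,i} g_i^l(x)$, a sum of nonnegative terms. Hence the optimal value of the LP is at least $0$, and a profile $x$ attains value $0$ if and only if $g_i^l(x) = 0$ for every player $i$ and every vertex $l$; equivalently, each $x^i$ is a best response against $x^{-i}$ under every vertex payoff matrix $R_1,\dots,R_K$.

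The main step, and the one I expect to be the crux, is to upgrade this \emph{vertex-wise} best-response condition to the \emph{ex-post} condition, which demands that each $x^i$ be a best response under \emph{every} $R \in \textbf{conv}\{R_1,\dots,R_K\}$. For this I would observe that, for fixed $x$, the regret $g_i(x,R) := \max_{a_i} e_{a_i}^\top R x - p_i(x,R)$ is a convex function of $R$, being the maximum of functions linear in $R$ minus a function linear in $R$; note $g_i^l(x) = g_i(x,R_l)$. Since $g_i(x,\cdot) \ge 0$ everywhere, convexity gives $g_i(x, \sum_l \lambda_l R_l) \le \sum_l \lambda_l\, g_i(x,R_l)$, so vanishing at all vertices forces $g_i(x,\cdot)$ to vanish on the whole convex hull; the converse is immediate since the vertices lie in the hull. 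This shows that $x$ has zero regret at every $R_l$ iff $(x^1,\dots,x^N)$ is an ex-post equilibrium, so the ex-post equilibria are precisely the minimizers of the LP attaining objective value $0$. Both directions then follow: any ex-post equilibrium yields objective $0$ and is thus optimal, and conversely any optimal solution, having objective $0$, satisfies the zero-regret conditions at all vertices and hence is an ex-post equilibrium. The one subtlety requiring care is that the converse presumes the optimal value equals $0$; I would record that this holds exactly when an ex-post equilibrium exists, which is the regime in which the stated correspondence is meaningful.
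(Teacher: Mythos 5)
Your proof is correct and takes essentially the same route as the paper's: the chain $\sum_{l,i} w_i^l \ge \sum_{l,i}\max_{a_i} e_{a_i}^\top R_l x \ge \sum_{l,i}(x^i)^\top R_l x = 0$ in Lemma~\ref{lem:LP-expost-n} is exactly your regret decomposition combined with the zero-sum property, and your convexity-of-regret step is the paper's Lemma~\ref{lem:LP-expost-n-conv} (vertex-wise best responses remain best responses under any convex combination of the $R_l$, by linearity of the payoff in $R$). You also flag the same caveat the paper's converse relies on, namely that the argument requires the optimal LP value to be zero, which holds exactly when an ex-post equilibrium exists.
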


We provide further discussion on the existence of ex-post equilibrium and characterize the \textit{maximal uncertainty sets} for an ex-post equilibrium to exist in Appendix \ref{app:maximal-uncertainty}. 

\section{Stochastic Games with Payoff Uncertainty}
\label{sec:stochastic}

We extend the discussion on payoff uncertainties to discounted stochastic games with infinite horizon. We further extend the notion of ex-post equilibrium to ex-post MPE in such games. We provide sufficient conditions for the existence of such ex-post MPE, and show that the feasible value of any two-player zero-sum stochastic games with payoff uncertainties can be located up to a value interval using dynamic programming.

A stochastic game \cite{shapley:1953} with $N$ players is defined as follows. Denote $\mathcal{S}$ as the stage space, and $\mathcal{X}_i(s)$ as the set of actions that are available to player $i$ in stage $s$. We consider finite stage and action spaces. 
For each stage $s$, define the stage action profile to be an element of the product space of action sets available to all players $i$ in stage $s$: $\textbf{x} \in \Pi_i \mathcal{X}_i(s)$. Denote by $M_i(\textbf{x}, s)$ the stage payoff for player $i$, and let $\Prob(s'|s, \textbf{x})$ denote the transition probability which is a distribution on the stage space $\mathcal{S}$. 


The game starts in an initial stage $s^0$. At each stage $t$, all players simultaneously choose their actions according to their strategies, the stage payoffs are realized, and the game transitions to the next stage according to the transition probability. Let $\mathcal{H}$ denote the histories of past actions and stages. Let $x_i (s, \mathcal{H})\in \mathcal{X}_i$ denotes the action of player $i$ at stage $s$, which is dependent on the histories and the current stage. Note that $x_i$ can be random, which corresponds to mixed strategies.


Given $x_1, \cdots, x_N$, the expected discounted payoff for player $i$ starting from stage $s^0$ is: 
\begin{align*}
    \lefteqn{\Pi_i(x_1, \cdots, x_N; s^0)}\\
    &= \E\bigg[\Sigma_{t=0}^\infty \gamma^t M_i(x_1(s^t, \mathcal{H}^t), \cdots, x_N(s^t, \mathcal{H}^t); s^t)\bigg],
\end{align*}
where $\gamma \in (0,1)$ is the discount factor, and the expectation is taken over the randomness in the stage transitions and the possibly mixed strategies.

\subsection{Ex-post MPE}

We consider stochastic games with uncertain payoffs, where for every stage $s \in \mathcal{S}$ and player $i$, the stage game is a polymatrix game. The stage payoff for player $i$ is $M_i(\textbf{x}, s) = \Sigma_{j \in [N], j \neq i} x_i(s) A^{ij}(s) x_j(s)$, where $A^{ij}(s)$ is in an uncertainty set $\mathcal{U}^{ij} (s)$.


An MPE refers to a perfect equilibrium where players' strategies depend only on the current stage but not the histories (i.e., \emph{Markov strategies}), in a stochastic game when the payoffs are fully known. We denote these strategies as $x_i(s)$. We generalize the notion of ex-post equilibrium to MPE.
\begin{definition}
Given a stochastic game with $N$ players, finite stage space $\mathcal{S}$, and uncertain payoffs $A^{ij}(s) \in \mathcal{U}^{ij} (s), \forall s \in \mathcal{S}, \forall i, j \in [N]$, a set of strategies $(x_1^\ast(s), \cdots, x^\ast_N(s))$ is an ex-post MPE if and only if $\forall i, j \in [N], \forall s \in \mathcal{S}, \forall A^{ij}(s) \in \mathcal{U}^{ij} (s)$, $(x_1^\ast(s), \cdots, x^\ast_N(s))$ maximizes the expected discounted payoff for all the players. 
\end{definition}

We provide sufficient conditions for the existence of an ex-post MPE of any $N$-player game with finite stage space and action space by reducing the game to a one-shot finite game. 

For each player $i$ and stage $s$, we create an ``agent'' player $(i, s)$ with action $a(i, s) \in \mathcal{X}_i(s)$. We denote the action profile of all the agent as $\textbf{a} = (a(1, s), \cdots, a(N, s), s \in \mathcal{S})$. The payoff to agent $(i, s)$ is as follows: $Q_{i, s}(\textbf{a}) = \E\bigg[\Sigma_{t=0}^\infty \gamma^t M_i(a(1, s^t), \cdots, a(N, s^t); s^t)|s^0 = s\bigg]$.  


\begin{restatable}[Proof in Appendix \ref{app:proof-polymatrix}]{theorem}{expostMPE}
\label{thm:expost-MPE}
Given that the agent game has an ex-post equilibrium $(a^\ast(i, s), i \in [N], s \in \mathcal{S})$, and defining the strategy for each player $i$ in stage $s \in \mathcal{S}$ as $x_i^\ast(s) = a^\ast(i, s)$, then $(x_i^\ast(s),i \in [N], x \in \mathcal{S})$ is an ex-post MPE for the stochastic game.
\end{restatable}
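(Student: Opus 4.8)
The plan is to exploit the exact correspondence between stationary Markov strategy profiles of the stochastic game and action profiles of the agent game, and then to upgrade ``no single agent can deviate profitably'' into ``no player can deviate profitably with an arbitrary Markov strategy'' via a dynamic-programming (policy-improvement) argument. First I would observe that a stationary Markov profile $(x_i(\cdot))_{i \in [N]}$ of the stochastic game is literally the same object as an agent-game profile $\mathbf{a}$, under the identification $a(i,s) = x_i(s)$, and that with this identification the agent payoff $Q_{i,s}(\mathbf{a})$ equals the discounted continuation value $\Pi_i(x_1,\dots,x_N; s)$ of player $i$ starting from stage $s$. Because an ex-post equilibrium is by definition a Nash equilibrium of the agent game simultaneously for every realization of the payoffs $A^{ij}(s) \in \mathcal{U}^{ij}(s)$, it suffices to fix one arbitrary realization and show that the induced complete-information stochastic game has $x^\ast$ as a Markov perfect equilibrium; quantifying over all realizations then yields the ex-post MPE.

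Next, I would fix a realization and a player $i$, freeze the opponents at $x^\ast_{-i}(\cdot)$, and note that player $i$ then faces a discounted MDP whose value under $x^\ast$ is $V_i(s) := Q_{i,s}(\mathbf{a}^\ast)$. The agent-game equilibrium condition says precisely that for each stage $s$, agent $(i,s)$ cannot raise $Q_{i,s}$ by switching its own action; that is, no \emph{stationary single-state} deviation (play some $b$ at $s$ and $x_i^\ast$ at every other stage) improves the value at $s$.

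The crux is that a single-agent deviation is stationary --- it changes player $i$'s action at $s$ on \emph{every} visit to $s$ --- whereas Markov perfection must rule out arbitrary, in particular multi-stage, Markov deviations by player $i$. I would bridge this gap with the policy-improvement theorem: if $V_i$ failed the Bellman optimality equation at some stage $s$, there would be an action $b$ with $M_i(b, x^\ast_{-i}(s); s) + \gamma \sum_{s'} \Prob(s' \mid s, b, x^\ast_{-i}(s)) V_i(s') > V_i(s)$, and then the stationary policy playing $b$ at $s$ and $x_i^\ast$ elsewhere would have value strictly exceeding $V_i(s)$ at $s$, contradicting the single-state-deviation optimality just established. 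Hence $V_i$ satisfies Bellman optimality at every stage, and since $\gamma \in (0,1)$ makes the Bellman operator a contraction with a unique fixed point, $V_i$ coincides with player $i$'s optimal value, so $x_i^\ast$ is a best response to $x^\ast_{-i}$ from every stage.

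Because $i$ and the realization were arbitrary, every player's strategy is simultaneously optimal from every stage under every feasible payoff, which is exactly the statement that $x^\ast$ is an ex-post MPE. I expect the main obstacle to be precisely the stationary-versus-one-shot deviation mismatch described above: the agent game only forbids deviations that are themselves Markov and single-state, so the policy-improvement step --- together with the discounting that guarantees a unique Bellman fixed point --- is what makes local optimality in the agent game equivalent to global best-response optimality in the stochastic game. A secondary point to handle carefully is that checking pure single-state deviations suffices, since payoffs are linear in a player's own mixed action and the induced MDP admits optimal stationary policies.
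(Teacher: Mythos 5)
Your proposal is correct, and it follows the same overall reduction as the paper: identify stationary Markov profiles with agent-game profiles so that the agent payoff $Q_{i,s}$ equals the continuation payoff $\Pi_i(\cdot\,; s)$, fix a payoff realization in the uncertainty set, and analyze each player's best-response problem as a discounted MDP against frozen Markov opponents. Where you differ is in rigor at the crucial step. The paper's proof asserts, as an ``$\iff$,'' that the agent-game ex-post condition is equivalent to $x_i^\ast(s)$ maximizing $\Pi_i(\cdot, \mathbf{x}_{-i}^\ast; s)$, and then appeals informally to the existence of Markov best responses to conclude optimality ``among all Markov strategies''; read literally, it never explains why immunity to deviations by a single agent $(i,s)$ --- which in the stochastic game are stationary, single-state deviations --- rules out arbitrary multi-stage Markov deviations by player $i$. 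That is exactly the one-shot-deviation gap you identify as the crux, and your policy-improvement argument is the standard and correct way to close it: if Bellman optimality failed at some stage $s$, monotonicity of the one-step operator would make the corresponding single-state deviation strictly profitable at $s$, contradicting the agent-game equilibrium, and discounting (the contraction property and uniqueness of the Bellman fixed point) then forces $V_i$ to coincide with the optimal value. Your version is therefore strictly more complete than the paper's, and it even delivers slightly more: Bellman optimality certifies that $x_i^\ast$ is a best response among all of player $i$'s strategies, history-dependent ones included, not merely among Markov strategies as the paper claims. Your closing remark that pure single-state deviations suffice, by linearity of the payoff in player $i$'s own mixed action, is also a detail the paper does not address.
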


Note that the agent game is not a polymatrix game due to the stage transitions. Therefore its ex-post equilibrium (when it exists) can not be found using LP(\ref{multiLP}). Solving for ex-post equilibria for multi-player game with more general payoff structures is a topic for future research.

\subsection{Value interval}

When the stage payoffs are fully known, and the games has two players,  \citet{shapley:1953} shows that all two-player, zero-sum stochastic games have a unique value, which can be obtained at an MPE.

Due to the uncertainty in the payoffs, the value of a two-player, zero-sum stochastic game can not be determined fully unless we have access to the complete payoff information. Instead, we show that such game has a value interval where, given any feasible payoffs in the uncertainty set, the value of the full-information game must be in this interval. Further, the interval can be determined by dynamic programming.

\begin{theorem} (Proof in Appendix \ref{app:proof-stochastic}.)
\label{thm:value-interval}
For any two-player, zero-sum stochastic game with finite stage space $\mathcal{S}$ and actions space $\mathcal{X}(s), \mathcal{Y}(s), \forall s \in \mathcal{S}$, let $x(s)^\top A(s)y(s)$ denote the stage payoff for player $x$ in stage $s$, $- x(s)^\top A(s)y(s)$ the stage payoff for player $y$ in stage $s$, and the uncertain payoff $A(s) \in U(s) = \textbf{conv}(\{A_1(s), \cdots, A_{k(s)}(s)\}), \forall s \in \mathcal{S}$. Such a game has a value interval $\mathcal{I}$, where $\forall A(s) \in U(s), \forall s \in \mathcal{S}$, the value of the full-information game with payoff matrices $A(s)$ must lie in $\mathcal{I}$. Further, $\mathcal{I}$ can be found by dynamic programming.
\end{theorem}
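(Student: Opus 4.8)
The plan is to construct two dynamic programming recursions — one tracking the largest achievable value of the full-information game over all feasible payoff realizations, and one tracking the smallest — thereby producing the endpoints of the interval $\mathcal{I}$. The starting point is Shapley's result: for any fixed realization $A(s) \in U(s)$ of the payoffs, the full-information two-player zero-sum stochastic game has a unique value $V$, characterized as the fixed point of the Shapley operator. Concretely, writing $Q(s, A) = A(s) + \gamma \sum_{s'} \Prob(s'|s,\cdot) V(s')$ for the stage-plus-continuation payoff matrix, the value satisfies the system $V(s) = \text{val}\big(Q(s,A)\big)$, where $\text{val}(\cdot)$ denotes the minimax value of a matrix game. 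My goal is to understand how $V(s^0)$ varies as $A = (A(s))_{s \in \mathcal{S}}$ ranges over $\prod_s U(s)$.

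\textbf{First I would} define $\bar V(s) = \max_{A} V_A(s)$ and $\underline V(s) = \min_{A} V_A(s)$, where $V_A$ is Shapley's value under realization $A$, and argue that these satisfy their own Bellman-type recursions. The key monotonicity fact is that Shapley's operator is monotone in the stage payoff matrices: if one increases every entry of some $A(s)$, the value $\text{val}(Q(s,A))$ increases, and since the operator is also a $\gamma$-contraction, this monotonicity propagates through the fixed point so that $V_A(s^0)$ is monotone in each $A(s)$. Combined with the observation that $\text{val}(\cdot)$ is itself a concave/convex-type function whose extremum over a convex hull $\text{conv}(\{A_1(s), \dots, A_{k(s)}(s)\})$ is attained at a vertex, I would conclude that the extremal values $\bar V, \underline V$ are achieved by some vertex selection $A(s) \in \{A_1(s), \dots, A_{k(s)}(s)\}$ at each stage. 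This reduces the continuum $\prod_s U(s)$ to a finite (though exponentially large) set of vertex profiles, and the recursions
\begin{align*}
\bar V(s) &= \max_{\ell \in [k(s)]} \text{val}\Big(A_\ell(s) + \gamma \sum_{s'} \Prob(s'|s,\cdot)\,\bar V(s')\Big),\\
\underline V(s) &= \min_{\ell \in [k(s)]} \text{val}\Big(A_\ell(s) + \gamma \sum_{s'} \Prob(s'|s,\cdot)\,\underline V(s')\Big)
\end{align*}
can be solved by value iteration, since each is a $\gamma$-contraction on $\R^{|\mathcal{S}|}$ and hence has a unique fixed point reachable by dynamic programming. Setting $\mathcal{I} = [\underline V(s^0), \bar V(s^0)]$ then gives the claimed interval, and every intermediate value is realized by continuity of $V_A(s^0)$ in $A$ over the connected set $\prod_s U(s)$, so the interval is tight.

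\textbf{The main obstacle I anticipate} is justifying that the extremization over the payoff realizations commutes with the fixed-point (value-iteration) operation — that is, that $\max_A V_A(s^0)$ genuinely equals the fixed point of the ``optimistic'' Shapley operator above, rather than merely being bounded by it. The subtlety is that the adversary choosing $A(s)$ to maximize the value and the minimax players choosing their mixed strategies are being optimized jointly across stages, so one must verify that the monotone-contraction argument lets the stagewise extremal choices be made independently and consistently at the fixed point. I would handle this by a standard contraction/monotonicity coupling argument: show both operators are monotone $\gamma$-contractions, verify $\bar V \ge V_A$ pointwise for every $A$ (giving one inequality), and then exhibit a specific maximizing realization $A^\star$ attaining $\bar V$ at the fixed point (giving the reverse), with the analogous argument for $\underline V$. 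Continuity of the matrix-game value $\text{val}(\cdot)$ and of the contraction fixed point in the payoff parameters supplies the tightness, completing the proof.
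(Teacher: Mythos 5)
There is a genuine gap, and it sits at the exact step you flagged as the ``main obstacle,'' but the failure is more basic than the interchange issue you anticipated: your reduction of the extremization to \emph{vertex selections} is false. The matrix-game value $\text{val}(\cdot)$ is monotone in the entries of the matrix, but it is neither convex nor concave (nor quasi-convex/quasi-concave) in the matrix, so its maximum or minimum over a convex hull need not be attained at a vertex. Concretely, take
$A_1 = \begin{pmatrix} 1 & 0 \\ 0 & 0 \end{pmatrix}$, $A_2 = \begin{pmatrix} 0 & 0 \\ 0 & 1 \end{pmatrix}$, $U = \textbf{conv}\{A_1,A_2\}$.
Then $\text{val}(A_1) = \text{val}(A_2) = 0$ (the column player can always avoid the positive entry), but the midpoint $\tfrac12(A_1+A_2) = \tfrac12 I$ has value $\tfrac14$. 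Embedding this as a one-shot stage of a stochastic game (all transitions to an absorbing zero-payoff stage), your recursions give $\bar V = \underline V = 0$, i.e.\ $\mathcal{I} = \{0\}$, yet the feasible payoff $\tfrac12 I$ yields value $\tfrac14 \notin \mathcal{I}$. So what breaks is not merely tightness of the endpoints but the \emph{containment} claim itself, which is the substance of the theorem. (The legitimate version of your ``optimistic operator'' must keep the inner maximization over the entire hull $U(s)$ --- a rectangularity-type relaxation that does upper-bound every $V_A$ --- but that inner problem is a non-concave optimization, and restricting it to vertices is exactly the invalid shortcut.)

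The paper's proof sidesteps this issue by working \emph{entrywise} rather than matrix-wise: for each stage $s$ and each pure action pair $(x,y)$ it sets $\delta(x,y,s) = \max_{A(s)\in U(s)} x^\top A(s) y$ and $\lambda(x,y,s) = \min_{A(s)\in U(s)} x^\top A(s) y$. For a \emph{fixed} pure pair this is linear in $A(s)$, so vertex attainment is legitimate entry by entry; the resulting envelope matrices $\bar M$ and $\underline M$ dominate (resp.\ are dominated by) every feasible matrix entrywise, even though they generally do not lie in $U(s)$ themselves. Monotonicity of $\text{val}$ in the entries then makes the Shapley-type operators built from these envelopes upper and lower bound the true operator for every feasible realization, and the $\gamma$-contraction argument (which you have correctly) propagates this to the fixed points, giving $\mathcal{I}(s) = [\beta^\ast(s), \alpha^\ast(s)] \ni V_A(s)$ for all $A$. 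In my counterexample this yields the valid interval $[0, \tfrac12] \ni \tfrac14$. Note also that the paper proves only containment, not that the endpoints are achieved; your continuity observation (that $\{V_A(s^0): A \in \prod_s U(s)\}$ is a connected set, hence an interval) is sound and is a nice complement, but it cannot rescue endpoints computed by a recursion that is itself incorrect.
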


\section{Discussion}
\label{sec:discussions}

There are several limitations to our work. First, we do not have necessary conditions for the existence of ex-post equilibria. Our current proof techniques for finding ex-post equilibria rely on bounded polyhedral uncertainty sets and are limited to polymatrix payoffs. In the context of polymatrix games the conditions for ex-post equilibrium are often formulated as optimization problems that use the extremal points of the uncertainty set. For other convex sets, such as norm balls, one could potentially construct a \textit{polyhedral outer approximation}. An important area of future research is to adapt iterative algorithms to find approximate solutions for games with more general payoff structures and uncertainty sets.

Our extensions to stochastic games show that a sufficient condition for existence of ex-post MPE in terms of existence of ex-post equilibrium for the agent game. We leave for future work to determine the necessary conditions for the existence of ex-post MPE. Considering payoff uncertainties at every stage can encode problems of practical relevance to RL such as the use of proxy rewards. Another interesting extension is to consider uncertainty around transition probabilities, which relates to mis-specified MDP problems. 

\bibliography{arxiv_ref}

\clearpage
\appendix


\section{Definitions for Nash Equilibrium, Two-player Zero-sum Games}
\label{app:definitions}

\subsection{Nash equilibrium for N-player polymatrix games}

Under complete information, a set of strategies is a \textit{Nash equilibrium} if each player has no incentive to unilaterally change their strategy given the strategies of all other players. Nash's seminal result shows that for every finite simultaneous-move game, a Nash equilibrium of mixed strategies exists \cite{nash:1950}.

Formally, a Nash equilibrium for the N-player zero-sum polymatrix game is defined as follows: 

\begin{definition}\label{def:Npnash}
(Nash equilibrium for N-player polymatrix game \cite{akarlin}) A mixed strategy profile $(x^{1*},...,x^{N*})$ is a Nash equilibrium for an N-player polymatrix game if for each player $i \in \{1,...,N\}$ and each mixed strategy $x^i \in \Delta_{m_i}$, we have
$$p_i(x^{1*},...,x^{i*},...,x^{N*}) \geq p_i(x^{1*},...,x^i,...,x^{N*})$$
Equivalently, for all $i \in \{1,...,N\}$,
$$x^{i*} \in \arg\max_{x^i \in \Delta_{d_i}} \sum_{[i,j] \in E} (x^i)^T A^{ij} x^{j*}$$
\end{definition}

\citet{cai:2016} show that for zero-sum polymatrix games with finite action space, Nash equilibria can be found efficiently using linear programming.

\subsection{Two-player zero-sum finite games}\label{sec:maxset2pzs}

Two-player zero-sum finite games can be defined as follows: suppose Player 1 has $n$ actions to choose from, and Player 2 has $m$ actions to choose from. The payoffs for each player can be represented by a payoff matrix $A \in \mathbb{R}^{n \times m}$: if Player 1 plays action $i$ and Player 2 plays action $j$, then $A_{ij}$ is the payoff for Player 1 and $-A_{ij}$ is the payoff for Player 2. 

A mixed strategy for Player 1 is defined by a vector in the simplex $x \in \Delta_n$, where $\Delta_n = \{x \in \mathbb{R}^n: x^T \textbf{1} = 1$, $x \geq 0\}$. Each entry $x_i$ is the probability that Player 1 plays action $i$. A similar mixed strategy vector $y \in \Delta_m$ can be defined for Player 2. If Player 1 chooses mixed strategy $x$ and Player 2 chooses mixed strategy $y$, then the expected payoff for Player 1 is $x^TAy$ and the expected payoff for Player 2 is $x^T(-A)y$. Each player simultaneously and independently chooses a strategy to maximize their expected payoff.

A Nash equilibrium for the two-player zero-sum game is defined as follows:

\begin{definition}\label{def:2pnash} (Nash equilibrium for two-player zero-sum game \cite{akarlin})
A pair of strategies $(x^\ast,y^\ast)$ is a Nash equilibrium in a two-player zero-sum game with payoff matrix $A$ if 
$$ x^\ast \in \arg\max_{x \in \Delta_n} x^T A y^\ast, \;\; y^\ast \in \arg\max_{y \in \Delta_m} (x^\ast)^T (-A) y$$
or equivalently,
$$\min_{y \in \Delta_m} (x^\ast)^T A y = (x^\ast)^TAy^\ast = \max_{x \in \Delta_n} x^T A y^\ast$$
\end{definition}
Definition \ref{def:2pnash} shows that the Nash equilibrium strategies $x^\ast$ and $y^\ast$ are best response strategies to each other: given that Player 2 plays with strategy $y^\ast$, the strategy $x^\ast$ maximizes Player 1's expected payoff. Likewise, given that Player 1 plays with strategy $x^\ast$, the strategy $y^\ast$ maximizes Player 2's expected payoff. 

\section{Graphical representation of the consolidated payoff matrix $R$}\label{app:r_matrix}
\begin{center}
\resizebox{.6\textwidth}{.6\textwidth}{%
\begin{tikzpicture}
\draw[step=2cm,gray, thin, dotted] (0,0) grid (12,12);
\draw[step=2cm,gray, thin] (-0,-0) grid (2,2);
\draw[step=2cm,gray, thin] (0,6) grid (6,12);
\draw[step=2cm,gray, thin] (8,0) grid (12,4);
\draw[step=2cm,gray, thin] (10,10) grid (12,12);
\draw (0, 0) rectangle (12,12);
\node at (1, 9) {\Large$A^{21}$};
\node at (3, 11) {\Large$A^{12}$};
\node at (1, 7) {\Large$A^{31}$};
\node at (5, 11) {\Large$A^{13}$};
\node at (3, 7) {\Large$A^{32}$};
\node at (5, 9) {\Large$A^{23}$};
\node at (1, 1) {\Large$A^{N1}$};
\node at (11, 11) {\Large$A^{1N}$};
\node at (9, 1) {\Large$A^{N,N-1}$};
\node at (11, 3) {\Large$A^{N-1,N}$};
\node at (7, 5) {\LARGE{$\ddots$}};
\foreach \x in {0,2,4 , 8, 10} {
\node at (\x+1, 11-\x) {\Large{$0$}};
\draw (\x,10-\x) rectangle (\x+2,12-\x); }

\node at (1, 12.3) {$d_1$ cols};
\node at (3, 12.3) {$d_2$ cols};
\node at (5, 12.3) {$d_3$ cols};
\node at (8, 12.3) {\LARGE{\ldots}};
\node at (11, 12.3) {$d_N$ cols};

\node at (-0.3, 11) {\rotatebox{90}{$d_1$ rows}};
\node at (-0.3, 9) {\rotatebox{90}{$d_2$ rows}};
\node at (-0.3, 7) {\rotatebox{90}{$d_3$ rows}};
\node at (-0.3, 4) {\rotatebox{90}{\LARGE{\ldots}}};
\node at (-0.3, 1) {\rotatebox{90}{$d_N$ rows}};
\end{tikzpicture} } 
\end{center}

\section{Proofs for Section \ref{sec:polymatrix}}
\label{app:proof-polymatrix}

\LPexpostn*
 Any ex-post equilibrium of the zero-sum polymatrix game with uncertainty sets $\mathcal{U} = \textbf{conv}{\{R_1, R_2,...R_K\}}$ is an optimal solution of the LP problem (\ref{multiLP}). Conversely, any optimal solution for the LP problem (\ref{multiLP}) is an ex-post equilibrium of the game.

We prove the above theorem through the following two lemmas, where in lemma \ref{lem:LP-expost-n} we consider $\mathcal{U} = {\{R_1, R_2,...R_K\}}$, and in lemma \ref{lem:LP-expost-n-conv} we generalize the result to $\mathcal{U} = \textbf{conv}{\{R_1, R_2,...R_K\}}$

\begin{lemma} \label{lem:LP-expost-n}
 Any ex-post equilibrium of the zero-sum polymatrix game with uncertainty sets $\mathcal{U} = {\{R_1, R_2,...R_K\}}$ is an optimal solution of the LP problem (\ref{multiLP}). Conversely, any optimal solution for the LP problem (\ref{multiLP}) is an ex-post equilibrium of the game. 
\end{lemma}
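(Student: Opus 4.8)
The plan is to reduce both directions to a single computation: evaluating the LP objective at its minimizing choice of $w$ for a fixed strategy profile $x$, and then comparing it against the zero-sum structure of the game. First I would observe that for a one-hot vector $e_{a_i}$ the quantity $e_{a_i}^\top R_l x$ is exactly the payoff to player $i$ for deviating to the pure action $a_i$ against the profile $x^{-i}$ under the $l$-th realization; concretely, $e_{a_i}^\top R_l x = \sum_{j:[i,j]\in E}(A^{ij}_l x^j)_{a_i}$, where $A^{ij}_l$ is the $[i,j]$ block of $R_l$. Consequently the constraints $w_i^l \ge e_{a_i}^\top R_l x$ for all $a_i$ force $w_i^l \ge \max_{a_i} e_{a_i}^\top R_l x$, i.e. $w_i^l$ is at least player $i$'s best-response value against $x^{-i}$ under $R_l$; and since the objective only pushes the $w_i^l$ downward, at any optimal solution $w_i^{l} = \max_{a_i} e_{a_i}^\top R_l x$. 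Thus the objective, minimized over $w$, equals $\sum_{l=1}^K\sum_{i=1}^N \mathrm{BR}_i(x;R_l)$, the total best-response value summed over players and realizations.

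Next I would exploit the zero-sum property. By multilinearity, Definition \ref{def: nzerosum} extends from pure profiles to all mixed profiles, so $\sum_{i=1}^N p_i(x;R_l) = x^\top R_l x = 0$ for every $x\in\Delta$ and every $l$. Since the best-response value dominates the payoff from actually playing $x^i$ (a maximum over pure actions is at least the $x^i$-weighted average), we have $\mathrm{BR}_i(x;R_l)\ge p_i(x;R_l)$ for each $i,l$; summing over $i$ and using $\sum_i p_i(x;R_l)=0$ gives $\sum_i \mathrm{BR}_i(x;R_l)\ge 0$, so the LP value is bounded below by $0$. The crucial observation is that equality is term-by-term: because for each $l$ the quantity $\sum_i\big(\mathrm{BR}_i(x;R_l) - p_i(x;R_l)\big)=0$ is a sum of nonnegative terms, the objective equals $0$ if and only if $\mathrm{BR}_i(x;R_l)=p_i(x;R_l)$ for every $i$ and every $l$. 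This last condition says precisely that $x^i$ places mass only on best responses, i.e. $x^i\in\arg\max_{x^i\in\Delta_{d_i}}\sum_{[i,j]\in E}(x^i)^\top A^{ij}_l x^j$ for all $l$ — exactly the definition of an ex-post equilibrium.

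With this equivalence in hand both directions follow quickly. If $x^\ast$ is an ex-post equilibrium, then taking $w_i^{l}=\max_{a_i}e_{a_i}^\top R_l x^\ast$ yields a feasible point with objective $0$, which is optimal since $0$ is the proven lower bound. Conversely, the existence of any ex-post equilibrium forces the optimal value to be $0$, so any optimal $(x^\ast,w^\ast)$ has objective $0$ and hence satisfies the term-by-term equality, making $x^\ast$ an ex-post equilibrium. The point I would be most careful about is this existence caveat: the lower bound of $0$ is unconditional, but the converse implication genuinely relies on the optimum being attained at value $0$ (equivalently, on an ex-post equilibrium existing); if no ex-post equilibrium exists the optimal value is strictly positive and the LP solution is not an equilibrium. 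I expect the real work to lie in making the ``max $\ge$ average'' equality characterization airtight and in correctly lifting $x^\top R_l x = 0$ from the pure-strategy zero-sum hypothesis to mixed strategies, rather than in any single hard step.
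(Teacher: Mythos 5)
Your proof is correct and follows essentially the same route as the paper's: the same chain of inequalities (LP objective $\geq$ sum of best-response values $\geq$ sum of realized payoffs $= 0$ by the zero-sum property), with the forward direction achieved by setting $w_i^l$ to the equilibrium best-response values and the converse carrying the same existence caveat the paper states. If anything, you are more explicit than the paper on two points it glosses over: the term-by-term argument (a zero sum of nonnegative gaps forces each gap $\mathrm{BR}_i(x;R_l)-p_i(x;R_l)$ to vanish, which is what actually yields per-player, per-realization optimality) and the multilinear lifting of the zero-sum identity $x^\top R_l x = 0$ from pure to mixed profiles.
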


\begin{proof}
Consider any feasible solution $x,w$: 
\begin{align}
\label{expostLP-nplayer}
\begin{split}
    \sum_{l=1}^K \sum_{i=1}^N w_i^l \geq &\sum_{l=1}^K \sum_{i=1}^N \max_{a_i} e_{a_i}^T R_lx \\
    =&\sum_{l=1}^K\sum_{i=1}^N \max_{y^i \in \tilde{\Delta}_{d_i}} (y^i)^T R_lx \\
    \geq& \sum_{l=1}^K \sum_{i=1}^N (x^i)^T R_lx\\
    =&\ 0
\end{split}
\end{align}
where $y^i\in \mathbb{R}^{\sum_{i \in N}d_i}$ is a vector where the sub-vector $y_{[i]} \in \Delta_{d_i}$, and all other entries are zero; $x^i$ is the vector generated from $x$, where the sub-vector $x_{[i]}$ is kept the same as $x$ and all other entries are zero. The first step comes from the fact that $x,w$ is feasible; the second step comes from the fact that the objective is linear in $e_{a_i}$; the last step follows since for every $R_i, i \in [K]$, the game is zero-sum. 

Consider an ex-post equilibrium $x^\ast$: the second inequality in (\ref{expostLP-nplayer}) becomes equality by the definition of the ex-post equilibrium. By setting $(w_i^l)^\ast = ((x^\ast)^i)^T R_l x^\ast, \forall i \in [N], l \in [K]$, the first inequality in (\ref{expostLP-nplayer}) becomes equality. Therefore, $\sum_{l=1}^K \sum_{i=1}^N (w_i^l)^\ast = 0$, which means $(x^\ast, w^\ast)$ is an optimal solution to the LP.

Consider an optimal solution $(x^\ast, w^\ast)$ to the LP problem (\ref{multiLP}), if the game has an ex-post equilibrium, then the objective function of the optimal solution is zero. Therefore, the second inequality in (\ref{expostLP-nplayer}) becomes equality. Therefore,  $x^\ast$ is an ex-post equilibrium.
\end{proof}

Next, we generalize the result to the larger uncertainty set $\mathcal{U} = \textbf{conv}{\{R_1, R_2,...R_K\}}$. 

\begin{lemma}
 \label{lem:LP-expost-n-conv}
Any ex-post solution to the N-player zero-sum polymatrix game with uncertainty set $\mathcal{U} = {\{R_1, R_2,...R_K\}}$ is an ex-post equilibrium for the N-player zero-sum polymatrix game with uncertainty set $\mathcal{U} = \textbf{conv}{\{R_1, R_2,...R_K\}}$.
\end{lemma}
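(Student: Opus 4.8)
The plan is to exploit the linearity of the polymatrix payoff in the payoff matrix. Fix an ex-post equilibrium $x^\ast = (x^{1\ast},\dots,x^{N\ast})$ for the finite uncertainty set $\mathcal{U} = \{R_1,\dots,R_K\}$. By definition, for every player $i \in [N]$ and every index $l \in [K]$, the strategy $x^{i\ast}$ is a best response against the fixed opponent profile under the matrix $R_l$, i.e. $x^{i\ast} \in \arg\max_{x^i \in \Delta_{d_i}} \sum_{[i,j]\in E} (x^i)^T A_l^{ij} x^{j\ast}$. Writing $f^i_l(x^i) := \sum_{[i,j]\in E} (x^i)^T A_l^{ij} x^{j\ast}$ for player $i$'s payoff under $R_l$, the equilibrium hypothesis is exactly $f^i_l(x^{i\ast}) \geq f^i_l(x^i)$ for all $x^i \in \Delta_{d_i}$ and all $l$.

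First I would take an arbitrary element of the convex hull, $A = \sum_{l=1}^K \lambda_l R_l$ with $\lambda_l \geq 0$ and $\sum_l \lambda_l = 1$, and expand player $i$'s payoff against the fixed opponents. Since the entry map $R \mapsto A^{ij}_{a_i a_j}$ is linear and the payoff $\sum_{[i,j]\in E}(x^i)^T A^{ij} x^{j\ast}$ is linear in those entries, the payoff under $A$ decomposes as $\sum_{l=1}^K \lambda_l\, f^i_l(x^i)$. The key step then follows: because $x^{i\ast}$ maximizes each $f^i_l$ over $\Delta_{d_i}$ and the weights $\lambda_l$ are nonnegative, a termwise comparison gives $\sum_l \lambda_l f^i_l(x^{i\ast}) \geq \sum_l \lambda_l f^i_l(x^i)$ for every $x^i \in \Delta_{d_i}$, so $x^{i\ast}$ is a best response under $A$. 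As $A$ and $i$ were arbitrary, this verifies the ex-post equilibrium condition for the whole convex hull $\textbf{conv}\{R_1,\dots,R_K\}$.

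I do not expect a serious obstacle here, since the argument is essentially the observation that a common maximizer of finitely many functions is a maximizer of any nonnegative combination of them. The only care needed is the bookkeeping with the block/one-hot structure of the consolidated matrix: one checks that $e_{a_i}^T R_l x^\ast$ indeed equals $\sum_{[i,j]\in E} e_{a_i}^T A_l^{ij} x^{j\ast}$ (the zero diagonal blocks of $R_l$ ensure player $i$'s own strategy drops out), and that the maximization decouples across players because each player's payoff depends only on its own rows of $R_l$ together with the fixed opponent strategies. Combining this lemma with Lemma~\ref{lem:LP-expost-n}, which handles the finite set $\mathcal{U} = \{R_1,\dots,R_K\}$, yields the full statement of Theorem~\ref{thm:LP-expost-n} for $\mathcal{U} = \textbf{conv}\{R_1,\dots,R_K\}$.
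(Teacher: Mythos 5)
Your proposal is correct and follows essentially the same route as the paper's proof: both exploit the linearity of the payoff in the payoff matrix to write the payoff under a convex combination $\sum_l \lambda_l R_l$ as $\sum_l \lambda_l f^i_l(x^i)$, and then use the fact that a common maximizer of the finitely many linear objectives $f^i_l$ maximizes any nonnegative combination of them. Your version merely makes the termwise comparison and the block-structure bookkeeping more explicit than the paper does.
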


\begin{proof}
Denote the game with $\mathcal{U} = {\{R_1, R_2,...R_K\}}$ as Game (I), and the game with $\mathcal{U} = \textbf{conv}{\{R_1, R_2,...R_K\}}$ as Game (II). 
Consider an ex-post equilibrium $x$ for Game (I). Then by the definition of the ex-post equilibrium, $\forall i \in [N],\ \forall l \in [K]$
\begin{align*}
    x^i \in \arg\max_{y^i \in \tilde{\Delta}_{d_i}} (y^i)^T R_l x
\end{align*}

Consider $R = \sum_{l = 1}^K z_l R_l$ where $z \in \Delta_k$.  We have that
$$\arg\max_{y^i \in \tilde{\Delta}_{d_i}} (y^i)^T \left(\sum_{l = 1}^K z_l R_l\right) x = \arg\max_{y^i \in \tilde{\Delta}_{d_i}} \sum_{l = 1}^K z_l \left((y^i)^T R_l x\right)$$
it follows that: $$x \in \arg\max_{y^i \in \tilde{\Delta}_{d_i}} \sum_{l = 1}^K z_l \left((y^i)^T R_l x \right) $$
Therefore: $$x \in \arg\max_{y^i \in \tilde{\Delta}_{d_i}} (y^i)^T \left(\sum_{l = 1}^K z_l R_l\right) x$$
Hence, $x$ is an ex-post equilibrium.
\end{proof}

\section{Proofs for Section \ref{sec:stochastic}}
\label{app:proof-stochastic}

In this appendix, we provide proof details for results on stochastic games.

\expostMPE*
Given the agent game has an ex-post equilibrium $(a^\ast(i, s), i \in [N], s \in \mathcal{S})$, define the strategy for each player $i$ in stage $s \in \mathcal{S}$ as $x_i^\ast(s) = a^\ast(i, s)$, then $(x_i^\ast(s),i \in [N], x \in \mathcal{S})$ is an ex-post MPE for the stochastic game.

\begin{proof}
Note that the agent game has finitely many players and actions. For each agent $(i, s)$, the payoff $Q_{i, s}(\textbf{x})$ lies in an uncertainty set defined as $$G_{i, s} = \bigg\{Q_{i, s}(\textbf{x}): Q_{i, s}(\textbf{x}) = \E\bigg[\Sigma_{t=0}^\infty\Sigma_{j \in [N], j \neq i}  \gamma^t  x_i(s) A^{ij}(s) x_j(s)|s^0 = s\bigg], A^{ij}(s) \in \mathcal{U}^{ij} (s) \bigg\}$$
Given this finite agent game has an ex-post equilibrium $\textbf{a}^\ast = (a^\ast(i, s), i \in [N], a \in \mathcal{S})$, define the strategy for each player $i$ in stage $s \in \mathcal{S}$ as $x_i^\ast(s) = a^\ast(i, s)$. By the definition of ex-post equilibrium, 
\begin{align*}
    a^\ast(i, s) &= \argmax_{a(i, s)} \;  Q_{i, s}(a(i, s), \textbf{a}^\ast \setminus (i, s)), \quad \forall Q_{i, s} \in G_{i, s} \\
    \iff x_i^\ast(s) &= \argmax_{x_i(s)}  \; \Pi_i(x_i(s), \textbf{x}^\ast_{-i}(s); s), \quad \forall A^{ij}(s) \in \mathcal{U}^{ij}(s), i, j \in [N]
\end{align*}
where we denote $\textbf{x}^\ast_{-i}(s)$ as the set of strategies $x_i^\ast(s)$ for the players except player $i$. Obviously $x_i^\ast(s)$ only depends on the current stage. Given $\textbf{x}^\ast_{-i}(s)$, $x_i^\ast(s)$ maximizes the expected discounted payoff of player $i$ among all Markov strategies. Since for $\forall A^{ij}(s) \in \mathcal{U}^{ij}(s), i, j \in [N]$, if all other players except player $i$ are playing a Markov strategy, player $i$ has a best response in stage $s$ which is a Markov strategy (by identifying $\argmax_{x_i \in \mathcal{X}_i(s)}\Pi_i(x_i, \textbf{x}_{-i}(s); s)$), we conclude that $(x_i^\ast(s),i \in [N], x \in \mathcal{S})$ is an ex-post MPE for the stochastic game.
\end{proof}

\begin{theorem} \label{thm:value-interval}
For any two-player, zero-sum stochastic game with finite stage space $\mathcal{S}$ and actions space $\mathcal{X}(s), \mathcal{Y}(s), \forall s \in \mathcal{S}$, denote $x(s)^\top A(s)y(s)$ the stage payoff for player $x$ in stage $s$, $- x(s)^\top A(s)y(s)$ the stage payoff for player $y$ in stage $s$, and the uncertain payoff $A(s) \in U(s) = \textbf{conv}(\{A_1(s), \cdots, A_{k(s)}(s)\}), \forall s \in \mathcal{S}$. Such a game has a value interval $\mathcal{I}$, where $\forall A(s) \in U(s), \forall s \in \mathcal{S}$, the value of the full-information game with payoff matrices $A(s), \forall s \in \mathcal{S}$ must be in $\mathcal{I}$. Further, $\mathcal{I}$ can be computed by dynamic programming.
\end{theorem}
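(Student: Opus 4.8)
The plan is to characterize $\mathcal{I}$ through two Shapley-type dynamic-programming operators and to sandwich every realization's value between their fixed points. Recall from \citet{shapley:1953} that for a fixed realization of the payoffs $A = \{A(s)\}_{s \in \mathcal{S}}$ the value function $V_A : \mathcal{S} \to \R$ is the unique fixed point of the contraction $(T_A v)(s) = \mathrm{val}\big(A(s) + \gamma\, C_v(s)\big)$, where $\mathrm{val}(\cdot)$ denotes the minimax value of the finite zero-sum matrix game over mixed strategies $x \in \Delta_{\mathcal{X}(s)}, y \in \Delta_{\mathcal{Y}(s)}$, and $C_v(s)$ is the ``continuation matrix'' with entries $C_v(s)_{ab} = \sum_{s'} \Prob(s' \mid s, a, b)\, v(s')$. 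Two facts about this operator will be used repeatedly: $\mathrm{val}$ is nondecreasing under an entrywise increase of its matrix argument, so each $T_A$ is monotone in $v$; and $T_A$ is a $\gamma$-contraction in the sup-norm.

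First I would define the optimistic and pessimistic operators $\overline{T} v = \sup_{A} T_A v$ and $\underline{T} v = \inf_{A} T_A v$, where the sup/inf decompose stagewise, e.g. $(\overline{T} v)(s) = \max_{A(s) \in U(s)} \mathrm{val}(A(s) + \gamma C_v(s))$ (the extremum is attained since $U(s)$ is compact and $\mathrm{val}$ is continuous). A standard ``sup/inf of a family of $\gamma$-contractions is a $\gamma$-contraction'' argument shows $\overline{T}, \underline{T}$ are themselves monotone $\gamma$-contractions, so each has a unique fixed point $\overline{V}, \underline{V}$ reachable by value iteration; this is the dynamic program, with each backward sweep solving the static optimization of a matrix-game value over the polytope $U(s)$. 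I then set $\mathcal{I} = [\underline{V}(s^0), \overline{V}(s^0)]$ for the initial stage $s^0$ (or, stagewise, $[\underline{V}(s), \overline{V}(s)]$).

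Next I would prove the sandwich $\underline{V} \le V_A \le \overline{V}$ pointwise for every realization $A$. Since $T_A v \le \overline{T} v$ for all $v$ by definition, evaluating at $v = V_A$ gives $\overline{T} V_A \ge T_A V_A = V_A$; monotonicity of $\overline{T}$ then makes the iterates $\overline{T}^{\,n} V_A$ nondecreasing with limit $\overline{V}$, whence $\overline{V} \ge V_A$, and the lower bound is symmetric. This already yields $\mathcal{I} \supseteq \{V_A(s^0)\}$. For tightness I would exhibit realizations attaining the endpoints: at the fixed point, choose for each $s$ a maximizer $A^\ast(s) \in U(s)$ of $\mathrm{val}(A(s) + \gamma C_{\overline{V}}(s))$; then $T_{A^\ast} \overline{V} = \overline{T}\, \overline{V} = \overline{V}$, so $\overline{V}$ is the value function of the genuine full-information game with payoffs $A^\ast$, i.e. $V_{A^\ast} = \overline{V}$, and symmetrically for $\underline{V}$. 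Finally, to justify calling $\mathcal{I}$ an interval I would invoke continuity: the fixed point of a uniformly $\gamma$-contracting operator that depends continuously on parameters varies continuously with those parameters, so $A \mapsto V_A(s^0)$ is continuous on the connected set $\prod_s U(s)$; its image is therefore a connected subset of $\R$ containing both endpoints, hence exactly $\mathcal{I}$.

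I expect the main obstacles to be two technical points rather than the overall structure. The first is verifying that $\overline{T}$ and $\underline{T}$ inherit contraction and monotonicity and that the endpoint-attaining selection $A^\ast$ exists and genuinely produces a full-information game whose Shapley value equals $\overline{V}$; this is where the fixed-point identity $T_{A^\ast}\overline{V} = \overline{V}$ does the work. The second is the continuity of $A \mapsto V_A$ needed for connectedness, which follows from a parametric Banach fixed-point argument using that $\mathrm{val}(\cdot)$ is Lipschitz in the matrix entries and the contraction modulus $\gamma$ is uniform, but must be stated carefully. Note that, unlike the polymatrix setting, the per-sweep optimization $\max/\min_{A(s)\in U(s)} \mathrm{val}(A(s) + \gamma C(s))$ is not an LP, since $\mathrm{val}$ is neither convex nor concave in the payoff matrix; the theorem only asserts that the endpoints are computable by the dynamic-programming recursion, not that each sweep reduces to an LP.
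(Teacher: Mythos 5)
Your proof is correct, and it takes a genuinely different route from the paper's. The paper does not optimize the game value over the uncertainty set at each sweep; instead it relaxes \emph{entrywise}: for each stage $s$ and each pure action pair it computes the extreme payoffs $\delta(x,y,s) = \max_{A(s)\in U(s)} x^\top A(s) y$ and $\lambda(x,y,s) = \min_{A(s)\in U(s)} x^\top A(s) y$ (trivial to evaluate, being a max/min over the vertices $A_l(s)$ since the payoff is linear in $A(s)$), forms the two fixed complete-information games $\bar M_\alpha(s)(x,y) = \delta(x,y,s) + \gamma\sum_{s'}\Prob(s'|s,x,y)\alpha(s')$ and $\underline M_\beta(s)(x,y) = \lambda(x,y,s)+\gamma\sum_{s'}\Prob(s'|s,x,y)\beta(s')$, and runs standard Shapley value iteration on each; monotonicity of $\mathrm{val}$ under entrywise dominance then sandwiches every realization's value in $[\beta^\ast(s), \alpha^\ast(s)]$. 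The trade-off is real: the paper's sweeps are ordinary matrix-game LPs, but its entrywise-extreme matrices generally do not lie in $U(s)$, so its interval can be strictly wider than the set of achievable values and its endpoints need not be attained by any realization --- the proof establishes containment only. Your operators $\overline{T}v = \sup_A T_A v$, $\underline{T}v = \inf_A T_A v$ cost more per sweep (as you correctly note, $\max_{A(s)\in U(s)}\mathrm{val}(A(s)+\gamma C_v(s))$ is not an LP, and the maximum can occur in the interior of the polytope, so vertex enumeration does not suffice), but in exchange you get that the endpoints are attained via the selection $T_{A^\ast}\overline{V} = \overline{V} \Rightarrow V_{A^\ast} = \overline{V}$, and your connectedness argument shows the value set is \emph{exactly} the interval. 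In that sense your proof substantiates the word ``tight'' used in the paper's abstract, which the paper's own argument does not.
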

\begin{proof}
We prove by value iteration. Denote $\gamma \in (0,1)$ as the discount factor and $\Prob(s'|s, x, y)$ as the transition probability. First, we pick arbitrary functions $\alpha: \mathcal{S} ->\R$, $\beta: \mathcal{S} ->\R$ such that $\alpha(s) > \beta(s), \forall s \in \mathcal{S}$. For each $s \in \mathcal{S}$, define matrix $ M_\alpha(s)$,  $ M_\beta(s)$  as: $\forall x \in \mathcal{X}(s), y \in \mathcal{Y}(s)$, 
\begin{align*}
    M_\alpha(s)(x, y) &= x^\top A(s)y + \gamma \Sigma_{s'\in \mathcal{S}}\Prob(s'|s, x, y)\alpha(s')\\
    M_\beta(s)(x, y) &= x^\top A(s)y + \gamma \Sigma_{s\in \mathcal{S}}\Prob(s'|s, x, y)\beta(s')
\end{align*}
Since $A(s) \in U(s) = \textbf{conv}(\{A_1(s), \cdots, A_{k(s)}(s)\})$, $x^\top A(s)y \in [\lambda(x, y, s), \delta(x, y, s)]$, where we use $\lambda(x, y, s)$ and $\delta(x, y, s)$ to denote the lower and upper bound of $x^\top A(s)y$, given $x, y$ and $A(s) \in U(s)$. 

For each stage $s$, consider the two-player zero-sum finite game with payoff matrix 
$$ \bar{ M}_\alpha(s)(x, y) = \delta(x, y, s) + \gamma \Sigma_{s'\in \mathcal{S}}\Prob(s'|s, x, y)\alpha(s')$$ $x \in \mathcal{X}(s), y \in \mathcal {Y}(s)$. Since $\mathcal{X}(s), \mathcal{Y}(s)$ are finite, this game has a value. Denote operator $T$ as $(T\alpha)(s) = \text{val}(\bar{ M}_\alpha(s))$. Given $\alpha$ and $\alpha'$, 
\begin{align*}
\lefteqn{||T\alpha - T\alpha'||_{\infty}} \\
&= \max_{s \in \mathcal{S}}|T\alpha(s) - T\alpha'(s)|\\
&= \max_{s \in \mathcal{S}}|\text{val}(\bar{ M}_\alpha(s)) - \text{val}(\bar{ M}_{\alpha'}(s))|\\
&\leq \max_{s \in \mathcal{S}}\max_{x, y} \gamma |\Sigma_{s'\in \mathcal{S}}\Prob(s'|s, x, y)\alpha(s') -\Sigma_{s'\in \mathcal{S}}\Prob(s'|s, x, y)\alpha'(s')| \\
&= \gamma \max_{s \in \mathcal{S}}|\alpha(s) - \alpha'(s)| \\
&= \gamma||\alpha - \alpha'||_{\infty}
\end{align*}

Since $\gamma \in (0,1)$, $T$ is a contraction. Thus, $\alpha_k = (T\alpha_{k-1})$ converges to a unique limit $\alpha^\ast: \alpha^\ast = T\alpha^\ast$ as $k -> \infty$. Similarly, define the two-player zero-sum finite game with payoff matrix 
$$ \underline{ M}_\beta(s)(x, y) = \lambda(x, y, s) + \gamma \Sigma_{s'\in \mathcal{S}}\Prob(s'|s, x, y)\beta(s')$$ $x \in \mathcal{X}(s), y \in \mathcal {Y}(s)$, and operator $(J\beta)(s) = \text{val}(\underline{J}_\beta(s))$. By a similar argument, $\beta_k = (J\beta_{k-1})$ converges to a unique limit $\beta^\ast: \beta^\ast = J\beta^\ast$ as $k -> \infty$. Notice that by construction, $ \bar M_\alpha(s)(x, y) > \underline M_\beta(s)(x, y), \forall s \in \mathcal{S}, \forall x \in \mathcal{X}(s), y \in \mathcal {Y}(s)$, which is preserved by the value iteration: $\bar M_{(T\alpha(s))}(x, y) > \underline M_{(J\beta(s))}(x, y), \forall s \in \mathcal{S}, \forall x \in \mathcal{X}(s), y \in \mathcal {Y}(s)$. Therefore, $\alpha^\ast(s) \geq \beta^\ast(s), \forall s \in \mathcal{S}$. 

For any feasible true payoffs $\{A(s), A(s) \in U(s), \forall s \in \mathcal{S}\}$, by construction we have $  \underline M_\beta(s)(x, y) \leq M_\alpha(s)(x, y) \leq \bar M_\alpha(s)(x, y), \forall s \in \mathcal{S}$. Doing value iteration on $\alpha$ with the true payoffs yields the value of the game $\alpha^\ast_{true}$ with full payoff information $\{A(s), A(s) \in U(s), \forall s \in \mathcal{S}\}$. Since the monotonicity is preserved by value iteration, we have $\alpha^\ast_{true}(s)  \in \mathcal{I}(s), \forall s \in \mathcal{S}$, where $\mathcal{I}(s) = [\beta^\ast(s), \alpha^\ast(s)]$. 
\end{proof}

\section{Further Discussions on Maximal Uncertainty Set for Ex-post Equilibrium}
\label{app:maximal-uncertainty}

Beyond finding ex-post equilibria for a given uncertainty set, we provide characterizations of the \textit{maximal uncertainty sets} for a given Nash equilibrium strategy profile. We define the maximal uncertainty set to be the largest set of payoffs for which the given strategy profile stays at a Nash equilibrium. For better intuition, we start with two-player zero-sum games, and then extend the analysis to zero-sum polymatrix games.

Informally, a tuple of strategies is \textit{stable} at an ex-post equilibrium if it is indifferent to the choice of payoff. It means that when changing the payoff, the players have no incentive to unilaterally change their strategies. A first step towards understanding when ex-post equilibrium is attainable is describing the types of uncertainty sets that admit one. We will frame this discussion in terms of \textit{maximal uncertainty sets} corresponding to a tuple of strategies. 

Formally, given a tuple of Nash equilibrium strategies to a nominal game, we will describe a maximal uncertainty set $\mathcal{U}$ such that the given strategies are at ex-post equilibrium with respect to $\mathcal{U}$.

\subsection{Maximal uncertainty in two-player zero-sum games}
First we consider the case of a  two-player zero-sum finite game. Similar to the N-player polymatrix games, we provide the definition of an ex-post equilibrium for a two-player zero-sum finite game.

An ex-post equilibrium for a two-player zero-sum finite game as follows:
\begin{definition}\label{def:2p_expost}
(ex-post equilibrium for two-player zero-sum game) 
A tuple of strategies $(x^\ast, y^\ast)$ is an ex-post equilibrium for the two-player zero-sum game with uncertainty set $\mathcal{U}$ if $\forall A \in \mathcal{U}$, the following holds:
\begin{align*}
x^\ast &\in \arg\max_{x\in \Delta_n} x^T A y^\ast\\
 y^\ast &\in \arg\max_{y\in \Delta_m} x^{\ast T} (-A) y
\end{align*}
\end{definition}

Let $x^\ast \in \Delta_n$ and $y^\ast \in \Delta_m$ be the mixed strategy Nash equilibrium of a nominal game with payoff $A \in \mathbb{R}^{n\times m}$. The maximal uncertainty set is defined as follows:
\begin{definition}
Given a nominal zero-sum  two-player game with payoff matrix $\tilde{A}$ and Nash equilibrium $(x^\ast, y^\ast)$, a set $\mathcal{A}(x^\ast, y^\ast)$ is the maximal uncertainty set if $\forall A\in \mathcal{A}(x^\ast, y^\ast)$:
\begin{align*}
    x^\ast & \in \arg\max_{x \in \Delta_n} x^T A y^\ast; \; \text{ and } \; y^\ast \in \arg\max_{y \in \Delta_m} (x^\ast)^T (- A) y
\end{align*}
and for all $A \notin \mathcal{A}(x^\ast, y^\ast)$:
\begin{align*}
    x^\ast & \notin \arg\max_{x \in \Delta_n} x^T A y^\ast ;\ \text{ or } \; y^\ast \notin \arg\max_{y \in \Delta_m} (x^\ast)^T (- A) y
\end{align*}
\end{definition}
As a consequence, an uncertain two-player zero-sum finite game with payoff uncertainty $\mathcal{U}$ is at ex-post equilibrium for $(x^\ast, y^\ast)$ if and only $\mathcal{U} \subseteq \mathcal{A}(x^\ast, y^\ast)$.\\
\begin{lemma}\label{lem:maxset2pzs}
 A pair $(x^\ast, y^\ast)$ of best-response strategies to game with payoff $\tilde{A}$ is best response to a game with payoff ${A}$ if ($A, x^\ast, y^\ast)$ satisfy the following conditions:
 \begin{align}
     e_i^T A y^\ast &= c \quad \text{if } x_i^\ast > 0, \forall i = 1, 2, \ldots, n  \label{maxset2pzs1} \\
     e_i^T A y^\ast &\leq c \quad \text{if } x_i^\ast = 0, \forall i = 1, 2, \ldots, n \label{maxset2pzs2} \\
     x^{\ast^T} A e_j &= c \quad \text{if } y_j^\ast > 0, \forall j = 1, 2, \ldots, m \label{maxset2pzs3}\\
     x^{\ast^T} A e_j &\geq c \quad \text{if } y_j^\ast = 0, \forall j = 1, 2, \ldots, m \label{maxset2pzs4}
 \end{align}
 Where $e_i, e_j$ are the appropriate unit vector such that $e_i^T A $ is the $i-$th row and $A e_j$ is the $j$th column of $A$.
\end{lemma}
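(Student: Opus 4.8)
The plan is to verify directly that, under conditions \eqref{maxset2pzs1}--\eqref{maxset2pzs4}, the pair $(x^\ast, y^\ast)$ satisfies the Nash equilibrium characterization of Definition \ref{def:2pnash} for the payoff matrix $A$. The nominal matrix $\tilde A$ plays no role in the argument beyond guaranteeing that $x^\ast \in \Delta_n$ and $y^\ast \in \Delta_m$ are genuine mixed strategies; the four listed conditions are exactly the complementary-slackness (support/indifference) relations for a mixed equilibrium, so the proof is a routine but careful bookkeeping of the constant $c$.

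First I would pin down the constant. Expanding the bilinear form along the rows, $(x^\ast)^T A y^\ast = \sum_{i=1}^n x_i^\ast (e_i^T A y^\ast)$. Every index with $x_i^\ast > 0$ contributes $c$ by \eqref{maxset2pzs1}, while indices with $x_i^\ast = 0$ contribute nothing; since $\sum_i x_i^\ast = 1$, this gives $(x^\ast)^T A y^\ast = c$. Thus $c$ is precisely the value attained by the pair, and the analogous column computation $(x^\ast)^T A y^\ast = \sum_j y_j^\ast (x^{\ast T} A e_j) = c$ using \eqref{maxset2pzs3} confirms consistency.

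Next I would establish the two best-response inequalities. For the row player, take any $x \in \Delta_n$ and write $x^T A y^\ast = \sum_i x_i (e_i^T A y^\ast)$. Conditions \eqref{maxset2pzs1} and \eqref{maxset2pzs2} together give $e_i^T A y^\ast \le c$ for every $i$, so $x^T A y^\ast \le c \sum_i x_i = c = (x^\ast)^T A y^\ast$, proving $x^\ast \in \arg\max_{x \in \Delta_n} x^T A y^\ast$. Symmetrically, for any $y \in \Delta_m$ write $(x^\ast)^T A y = \sum_j y_j (x^{\ast T} A e_j)$; conditions \eqref{maxset2pzs3} and \eqref{maxset2pzs4} give $x^{\ast T} A e_j \ge c$ for every $j$, so $(x^\ast)^T A y \ge c = (x^\ast)^T A y^\ast$, i.e.\ $y^\ast \in \arg\min_{y \in \Delta_m} (x^\ast)^T A y = \arg\max_{y \in \Delta_m} (x^\ast)^T (-A) y$. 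Invoking Definition \ref{def:2pnash} then yields that $(x^\ast, y^\ast)$ is a Nash equilibrium of the game with payoff $A$.

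The argument presents no real obstacle; the only point requiring attention is that the single constant $c$ simultaneously serves as the upper envelope for the row payoffs and the lower envelope for the column payoffs, and that it coincides with the game value $(x^\ast)^T A y^\ast$. Once this is verified, both inequalities follow immediately from nonnegativity of the simplex weights.
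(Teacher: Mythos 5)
Your proof is correct, and it takes a genuinely different route from the paper's. You verify sufficiency directly: you pin down $c = (x^\ast)^T A y^\ast$ via the support conditions \eqref{maxset2pzs1} and \eqref{maxset2pzs3}, then use \eqref{maxset2pzs1}--\eqref{maxset2pzs2} to conclude $e_i^T A y^\ast \le c$ for every row and hence $x^T A y^\ast \le c$ for all $x \in \Delta_n$, and symmetrically use \eqref{maxset2pzs3}--\eqref{maxset2pzs4} to get $(x^\ast)^T A y \ge c$ for all $y \in \Delta_m$; this is exactly the best-response characterization of Definition \ref{def:2pnash}. The paper instead argues by contradiction: assuming one of the conditions fails, it exhibits a profitable deviation (transferring or swapping probability mass between two pure strategies), so the pair cannot be an equilibrium. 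Note that the paper's deviation argument in fact establishes the \emph{converse} implication --- that the four conditions are necessary for $(x^\ast, y^\ast)$ to remain a best-response pair under $A$ --- which is the half needed to justify calling the set of such matrices ``maximal''; the sufficiency direction, which is what the lemma literally asserts, is left implicit there. Your direct averaging argument supplies precisely that sufficiency half, cleanly and completely; combined with the paper's necessity argument, the two together support the full maximal-uncertainty-set claim made after the lemma.
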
 

\begin{proof}
To show  \ref{maxset2pzs1}, assume by contradiction: 
$$\exists i, i' \quad \text{s.t.} \quad  x_i^\ast > 0,\; \; x_{i'}^\ast >0 \quad  \text{and} \quad e_i^T A y^\ast > e_{i'}^T{A}y^\ast$$ 
Player 1 is incentivized to unilaterally change the strategy to $x^{\ast\ast}$ by transferring the probability associated with strategy $i'$ to strategy $i$, i.e. $x^{\ast\ast}_{i}=x^\ast_{i}+x^\ast_{i'}$, and $x^{\ast\ast}_{i'} = 0$.

Similarly for \ref{maxset2pzs2}, if $x^\ast_i = 0$, then $\exists i'$ such that $x_{i'} > 0$. Assume by contradiction that:
$$\exists i, i' \quad \text{s.t.} \quad  x_i^\ast = 0,\; \; x_{i'}^\ast >0 \quad  \text{and} \quad e_i^T{A}y^\ast > e_{i'}^T{A}y^\ast = c$$
Player 1 is incentivized to change the strategy to $x^{\ast\ast}$ by swapping the probability associated with strategy $i'$ to strategy $i$, i.e. $x^{\ast\ast}_{i}=x^\ast_{i'}$, and $x^{\ast\ast}_{i'} = 0$. By a symmetric argument we can show \ref{maxset2pzs3} and \ref{maxset2pzs4}. The constant $c$ is the same across all identities.
\end{proof} 
By Lemma \ref{lem:maxset2pzs} we can conclude that the \textit{maximal uncertainty set} associated to a nominal zero-sum  two-player game: $(x^\ast, y^\ast, \tilde{A})$ is the set of matrices that satisfy \ref{maxset2pzs1}-\ref{maxset2pzs4}. 

\subsection{N-player zero-sum polymatrix games}
We can extend the argument in \ref{lem:maxset2pzs} to zero-sum polymatrix games. Let $A^{ij}\in \mathbb{R}^{d_i\times d_j}$ be the payoff matrix for player $i$ when playing against player $j$. Suppose we have $N$ players with respective mixed strategies $\{x^1,...,x^N\}$ where $ x^i \in \Delta_{d_i},\ \forall i \in [N]$. The strategy $x^i$ chosen by player $i$ is used across all the pairwise games that $i$ participates in. Let $E$ be the adjacency set. The payoff of player $i$ is the sum of payoffs in all the games as defined as in \ref{def: nzerosum}:
\begin{align*}
    p_i(x) = \sum_{[i, j]\in E}x^i A^{ij}x^j
\end{align*}

When there is no uncertainty in the payoff matrices, a tuple of strategies $\{x^{1\ast},...,x^{N\ast}\}$ is at Nash equilibrium if the following holds:
\begin{align*}
    x^{i\ast} \in \arg \max_{x^i \in \Delta_{d_i}} p(x^{1\ast}, \ldots, x^i, \ldots x^{N\ast}) \quad \forall i \in 1, 2, \ldots N
\end{align*}

We define the maximal uncertainty set in a polymatrix game as follows:
\begin{definition}
Given a nominal polymatrix games with $N$ players, adjacency set $E$ and Nash equilibrium strategy tuple $X^\ast = (x^{1\ast}, x^{2\ast}, \ldots , x^{N\ast} )$, the maximal uncertainty set for the game is the product set: $\mathcal{A}(X^\ast) = \mathcal{A}^{1, 2}(X^\ast)\times \mathcal{A}^{2, 1}(X^\ast) \times \ldots \times \mathcal{A}^{N, N-1}(X^\ast)$ is the maximal set that satisfies the following:
\begin{align*}
 x^{i\ast} \in \arg \max_{x^i\in \Delta_{d_i}}\sum_{[i, j]\in E}x^i {A}^{ij}x^{j\ast}; \quad \forall i, \forall {A}^{ij}\in \mathcal{A}^{ij}
\end{align*}
\end{definition}
As a consequence, an uncertain polymatrix game with payoff uncertainty $\mathcal{U}=\mathcal{U}^{1,2}\times \ldots \times \mathcal{U}^{N, N-1}$ admits ex-post equilibrium iff $\mathcal{U}\subseteq \mathcal{A}(X^\ast)$ (i.e. $\forall i\ne j, \mathcal{U}^{i, j} \subseteq \mathcal{A}^{i, j}(X^\ast)$).

\begin{lemma}\label{lem:maxset}
 A tuple $(x^{1\ast}, \ldots,  x^{N\ast})$ is best response to a polymatrix game with payoffs $\tilde{A}^{ij}$ if the following conditions are satisfied for each player:
 \begin{align}
     e_k^T\sum_{[i, j]\in E}{A}^{ij} x^{j\ast} &= c_i, \quad \text{if } x^{i\ast}_k > 0, \forall k = 1, 2, \ldots, d_i  \label{maxset1} \\
     e_k^T\sum_{[i, j]\in E}{A}^{ij} x^{j\ast} &\leq c_i, \quad \text{if } x^{i\ast}_k = 0, \forall k = 1, 2, \ldots, d_i \label{maxset2}
 \end{align}
 Where $e_k$ is the $k$-th unit vector, hence $ e_k^T {A}^{ij}$ denotes is the $k-$th row in the payoff matrix associated to players $(i,j)$. Lastly $x^{i\ast}_k$ is the $k$-th entry in the strategy vector of player $i$.
\end{lemma}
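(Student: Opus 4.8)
The plan is to exploit the defining feature of a polymatrix game: once the strategies of all players other than $i$ are held fixed at $x^{j\ast}$, player $i$'s payoff $p_i(x^{1\ast},\ldots,x^i,\ldots,x^{N\ast}) = \sum_{[i,j]\in E}(x^i)^T A^{ij} x^{j\ast}$ is a \emph{linear} function of its own strategy $x^i$ over the simplex $\Delta_{d_i}$. Concretely, I would introduce for each player $i$ the effective payoff vector $g^i := \sum_{[i,j]\in E} A^{ij} x^{j\ast} \in \R^{d_i}$, whose $k$-th coordinate $g^i_k = e_k^T \sum_{[i,j]\in E} A^{ij} x^{j\ast}$ is exactly the quantity appearing on the left-hand side of conditions (\ref{maxset1})--(\ref{maxset2}). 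With this notation the best-response requirement $x^{i\ast}\in\arg\max_{x^i\in\Delta_{d_i}} (x^i)^T g^i$ is a linear program, and the content of the lemma reduces to a standard characterization of its optimizers.

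Next I would record the elementary fact that the maximum of a linear functional $(x^i)^T g^i$ over $\Delta_{d_i}$ equals $\max_k g^i_k$ and is attained precisely by those distributions supported on $\arg\max_k g^i_k$. Setting $c_i := \max_k g^i_k$, conditions (\ref{maxset1}) and (\ref{maxset2}) assert that every action in the support of $x^{i\ast}$ attains the common value $c_i$ while no action exceeds $c_i$; hence $(x^{i\ast})^T g^i = c_i$ is the optimal value and $x^{i\ast}$ is a best response for player $i$. To mirror the exchange argument of Lemma \ref{lem:maxset2pzs}, I would also give the contrapositive explicitly: if $x^{i\ast}$ were not optimal, there would exist an action $k'$ with $g^i_{k'} > (x^{i\ast})^T g^i$, so transferring probability mass onto $k'$ would strictly raise player $i$'s payoff; but (\ref{maxset2}) forces $g^i_{k'}\le c_i = (x^{i\ast})^T g^i$, a contradiction.

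Since the conditions (\ref{maxset1})--(\ref{maxset2}) are imposed separately for every player $i$ against the \emph{same} fixed opponent profile $(x^{j\ast})_{j\neq i}$, and a unilateral deviation by player $i$ affects only $i$'s own payoff through $x^i$, the verification decouples completely across players: each $x^{i\ast}$ is simultaneously a best response, so the tuple $(x^{1\ast},\ldots,x^{N\ast})$ is at equilibrium in the sense of Definition \ref{def:Npnash}. I do not expect a genuine obstacle here; the one point worth flagging is that, unlike the two-player zero-sum case where a single constant $c$ couples the row and column conditions through the zero-sum structure, in the polymatrix setting each player carries its \emph{own} constant $c_i$ and the conditions are entirely independent across players. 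The real work is therefore just the linear-programming optimality statement over the simplex, applied player by player.
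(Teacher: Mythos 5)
Your proof is correct, but it takes a genuinely different route from the paper's. You prove the implication exactly as stated: holding the opponents fixed, player $i$'s problem is the linear program $\max_{x^i\in\Delta_{d_i}}(x^i)^T g^i$ with $g^i=\sum_{[i,j]\in E}A^{ij}x^{j\ast}$, and conditions (\ref{maxset1})--(\ref{maxset2}) say precisely that $x^{i\ast}$ puts all its mass on coordinates attaining $c_i=\max_k g^i_k$ while no coordinate exceeds $c_i$, so $x^{i\ast}$ is optimal; the check decouples player by player. The paper instead runs an exchange argument by contradiction: it assumes the conditions \emph{fail} (two actions $k,k'$ with the better one underused) and exhibits a profitable transfer or swap of probability mass. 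Read literally, that establishes the \emph{converse} implication --- a tuple of best responses must satisfy (\ref{maxset1})--(\ref{maxset2}) --- which is the direction relevant to \emph{maximality} of the uncertainty set, whereas your direct argument establishes the stated sufficiency, i.e., that every payoff matrix satisfying the conditions keeps $X^\ast$ at equilibrium. The two arguments are complementary, and yours is the one that matches the lemma as written. One cosmetic slip: in your contrapositive paragraph you invoke only (\ref{maxset2}) to bound $g^i_{k'}\le c_i$, but for $k'$ in the support of $x^{i\ast}$ that bound comes from (\ref{maxset1}); your earlier sentence (``no action exceeds $c_i$'') already combines both, so nothing breaks. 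Your closing remark --- that the two-player zero-sum case of Lemma \ref{lem:maxset2pzs} ties both players to a single constant $c$ through the zero-sum structure, while here each player carries its own independent $c_i$ --- is accurate and worth keeping.
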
 

\begin{proof}
The proof is similar to the proof for \ref{lem:maxset2pzs}. To show \ref{maxset1} assume by contradiction that $\exists i, k, k'$ such that:
$$x_{k}^{i\ast}>0, \; x_{{k'}}^{i\ast}>0\quad \text{and} \quad e_k^T \sum_{[i, j]\in E}{A}^{ij}x^{j\ast} > e_{k'}^T \sum_{[i, j]\in E}{A}^{ij}x^{j\ast} $$
player $i$ is incentivized to change the strategy to $x^{i\ast\ast}$ by transferring the probability associated to strategy $k'$ to strategy $k$, (i.e. $x^{i\ast\ast}_{{k}} = x^{i\ast}_{{k}} + x^{i\ast}_{{k'}}$ and $x^{i\ast\ast}_{{k'}} = 0$ ).
To show \ref{maxset2}, assume by contradiction that $\exists i,  k, k'$ such that:
$$x_k^{i\ast}=0,\; x_{k'}^{i\ast}>0\quad \text{and} \quad e_k^T \sum_{[i, j]\in E}{A}^{ij}x^{j\ast} > e_{k'}^T \sum_{[i, j]\in E}{A}^{ij}x^{j\ast} $$
player $i$ is incentivized to change the strategy to $x^{i\ast\ast}$ by swapping the probability associated to strategy $k'$ to strategy $k$, (i.e. $x^{i\ast\ast}_{{k}} = x^{i\ast}_{{k'}}$ and $x^{i\ast\ast}_{{k'}} = 0$ ).
\end{proof}

Note that it is not necessary for an arbitrary game from the maximimal uncertainty game to be zero-sum as in Definition \ref{def: nzerosum}. This means that an optimal strategy of a nominal zero-sum polymatrix game remains optimal in an ex-post sense for the maximal uncertainty set defined above even when the zero-sum condition is violated. Lemma \ref{lem:maxset2pzs} and \ref{lem:maxset}  define a set of restrictions on payoffs that preserve the best-response property of a tuple of strategies.

\end{document}